\documentclass[11pt]{article}

\usepackage[margin=1.1in]{geometry}
\usepackage{amsmath,amssymb,amsthm}
\usepackage{xcolor}
\usepackage[numbers,sort&compress]{natbib}
\usepackage[colorlinks=true,linkcolor=blue!60!black,citecolor=blue!60!black,urlcolor=blue!60!black]{hyperref}

\newtheorem{theorem}{Theorem}
\newtheorem{corollary}{Corollary}
\newtheorem{lemma}{Lemma}

\theoremstyle{remark}
\newtheorem{remark}{Remark}

\newcommand{\cY}{\mathcal{Y}}
\newcommand{\E}{\mathbb{E}}
\newcommand{\KL}[2]{D(#1 \,\|\, #2)}
\newcommand{\JS}{\mathrm{JS}}
\newcommand{\Dbar}{\bar D_0}
\newcommand{\cD}{\mathcal{D}}

\title{The Optimal Discounting Parameter of the Power Prior\\
under Predictive Log-Loss}

\author{Yuriy A. Reznik\thanks{Massachusetts Institute of Technology,
Cambridge, MA. E-mail: \texttt{yreznik@mit.edu}.}}

\date{August 2026}

\begin{document}

\maketitle

\begin{abstract}
The power prior of Ibrahim and Chen incorporates historical data
into a Bayesian analysis by raising the historical likelihood to a
power $a_0 \in [0, 1]$. The choice of the exponent has remained an
open question. This paper gives a closed-form answer under the
predictive log-loss. For a model with $d$ parameters, a historical
sample of size $N_0$, and average Kullback--Leibler divergence
$\Dbar$ between the historical and current data-generating
distributions, the optimal exponent is
$a_0^{*} = d/(2 N_0 \Dbar + d)$. Equivalently, the optimally
borrowed effective sample size obeys the harmonic law
$1/E^{*} = 1/N_0 + 2\Dbar/d$: compatible data are pooled in full,
and any difference caps the borrowed information at $d/(2\Dbar)$
observations. The result is exact for multinomial data and extends
to smooth parametric families. The law benchmarks adaptive
borrowing, explains the reported degeneracy of the normalized power
prior, and shows that neither subsetting the data nor decaying the
exponent improves on the correctly discounted constant.
\end{abstract}

\section{Introduction}
\label{sec:intro}

The \emph{power prior} of Ibrahim and Chen~\cite{IbrahimChen00} is one of the
standard devices for incorporating historical data into a Bayesian
analysis. Given a statistical model with parameter~$\theta$,
historical data~$\cD_0$ with likelihood $L(\theta \mid \cD_0)$, and an
initial prior $\pi_0(\theta)$, it takes the form
\begin{equation}
\label{eq:pp}
\pi(\theta \mid \cD_0, a_0)
\;\propto\;
L(\theta \mid \cD_0)^{a_0}\, \pi_0(\theta),
\qquad
a_0 \in [0, 1],
\end{equation}
so that~$a_0 = 0$ discards the historical data and~$a_0 = 1$ pools
them with the current data at full weight. From the outset, the
\emph{discounting parameter} $a_0$ was conceived as a quantitative
expression of
heterogeneity: in the original formulation, $a_0$ is a scalar
quantifying the difference between the current and the historical
data~\cite{IbrahimChen00}. The construction has since been applied
across clinical trials, toxicology, environmental science, and survey
inference, and its variants form an active literature~\cite{IbrahimChen15}.

The choice of~$a_0$ has remained the construction's open question.
Ibrahim, Chen, and Sinha~\cite{IbrahimChenSinha03} settled the
question of \emph{form}. For fixed~$a_0$, the power-prior posterior
minimizes a convex combination of two Kullback--Leibler divergences,
anchored at the no-borrowing and full-borrowing posteriors, so the
geometry of partial borrowing is optimal. The \emph{amount} of
borrowing is another matter. Fixed values of~$a_0$ are typically
explored by sensitivity analysis. Treating~$a_0$ as random leads to
the \emph{normalized power prior} of Duan et al.~\cite{Duan06} and
Neuenschwander et al.~\cite{Neuenschwander09}. Its connection to
hierarchical models was developed by Chen and Ibrahim~\cite{ChenIbrahim06}, and it anchors the modern adaptive borrowing
methodology, alongside alternatives such as commensurate priors~\cite{Hobbs11}. The prior effective sample size, approximately~$a_0 N_0$ for a historical dataset of size~$N_0$, gives the parameter
its operational meaning~\cite{Morita08}.

Recent theory shows that the adaptive route resolves the selection
problem only partially. Pawel, Aust, and Held~\cite{Pawel23} proved
that normalized power priors always discount. Shen, Carvalho, Psioda,
and Ibrahim~\cite{Shen26} proved that in generalized linear models
the marginal posterior of~$a_0$ collapses to a point mass at zero
under any discrepancy between the historical and current data, while
failing to converge to one when the two are fully compatible. The
same authors construct optimal hyperpriors for~$a_0$ by minimizing
Kullback--Leibler and mean-squared-error criteria, numerically. What
the literature does not contain, to the best of the author's
knowledge, is the mapping that the original formulation implicitly
called for: a closed-form expression of the optimal~$a_0$ as a
function of the heterogeneity and the historical sample size, under a
stated objective. That question has stood open since the power prior
was introduced.

This paper derives that mapping. The objective is the
\emph{predictive log-loss}: the expected Kullback--Leibler divergence of the
prior-predictive distribution of the current data from their true
law, evaluated at the design stage. The log-loss is a strictly proper
scoring rule, its cumulative form is the \emph{prequential}
criterion of
Dawid~\cite{Dawid84}, and it makes the problem exactly solvable. For
a smooth parametric family of dimension~$d$, including generalized
linear models under fixed designs, with a historical sample of size~$N_0$ and average per-observation divergence~$\Dbar$ between the
historical and current data-generating distributions
(Section~\ref{sec:setup}), the optimal exponent is
\begin{equation}
\label{eq:a0star-intro}
a_0^{*} = \frac{d}{2 N_0 \Dbar + d},
\end{equation}
and, equivalently, the optimally borrowed effective sample size
$E^{*} = a_0^{*} N_0$ obeys the \emph{harmonic combination law}
\begin{equation}
\label{eq:harmonic-intro}
\frac{1}{E^{*}} = \frac{1}{N_0} + \frac{2\Dbar}{d}.
\end{equation}
The law has the two endpoints one would demand of it. At~$\Dbar = 0$
it gives~$a_0^{*} = 1$: fully compatible historical data should be
pooled without discount. As $N_0 \to \infty$ at fixed~$\Dbar > 0$,
the exponent vanishes at rate $a_0^{*} \sim d/(2 N_0 \Dbar)$, but
the borrowed information does not. The effective sample size
converges to the finite cap~$d/(2\Dbar)$: the amount of trust that a
heterogeneity of~$\Dbar$ nats per observation can support, no matter
how large the historical study. Nats are units of the natural
logarithm, defined in Section~\ref{sec:setup}.

The mechanism behind~\eqref{eq:a0star-intro} is a Wilks
identity~\cite{Wilks38}.
The log-density of the power prior at the current parameter, relative
to its mode, is the discounted historical log-likelihood-ratio
$a_0 [\ell_0(\hat\theta_0) - \ell_0(\theta)]$. Its expectation
splits into $N_0 \Dbar + \frac d2$: the divergence plus the Wilks
excess of the historical fit. The optimum balances this combined cost
of borrowing against the $\frac d2 \ln \frac{1}{a_0}$ estimation
benefit.

The paper establishes the law in two settings. For general smooth
parametric families the risk expansion is derived by Laplace methods
under standard regularity conditions (Theorem~\ref{thm:general}).
For the multinomial family with a Jeffreys initial prior, the power
prior is an explicit Dirichlet distribution, and the expansion is
established exactly, with explicit constants, at every ratio of
borrowed to current information (Theorem~\ref{thm:multinomial}).
The constants are confirmed by exact calculation for Bernoulli data
(Section~\ref{sec:numerics}). The multinomial family thus serves
both as the rigorous anchor of the general law and as its
computational testbed.

Three consequences connect the law to the literature above. First,
it supplies the benchmark against which adaptive borrowing can be
judged, and under that benchmark the degeneracy theorem of~\cite{Shen26} splits into two different findings. The collapse of
the posterior of~$a_0$ to zero under discrepancy is not a pathology.
It is the correct behavior of the optimal exponent, and the quantity
that matters, the borrowed effective sample size, stabilizes at~$d/(2\Dbar)$ rather than vanishing. The failure to reach~$a_0 = 1$
under full compatibility, by contrast, is a genuine loss. Second,
discounting beats data selection. A power prior applied to the full
historical dataset achieves lower predictive risk than full
borrowing from a subset of the same effective size, by an amount
growing to~$d/2$ nats. This quantifies the folklore preference for
discounting all the data over using less of it. Third, the fixed value~\eqref{eq:a0star-intro} remains optimal at every individual current
observation as the current sample accrues. Decaying the exponent
over time cannot improve on the constant, because the fade of
historical influence is already delivered by Bayesian updating
itself.

The mechanics of the derivation are adapted from a different domain.
In universal source coding, the present author and Anisimov~\cite{ReznikAnisimov03} analyzed the redundancy of compression codes
built from the Krichevsky--Trofimov estimator~\cite{KT81} trained on
a sample from a mismatched source, and found the optimal training
length $\ell^{*} = d/(2D)$, a result whose structure the reader will
recognize in the cap of~\eqref{eq:harmonic-intro}. Code redundancy is
precisely the cumulative predictive log-loss, so the transfer between
the two problems is natural, and the present paper carries it out in
self-contained terms.

The remainder of the paper is organized as follows.
Section~\ref{sec:setup} defines the setting, the power prior, and the
predictive criterion. Section~\ref{sec:general} states the risk
expansion for smooth parametric families and derives the optimal
exponent, the
harmonic law, and the dominance of discounting over subsetting.
Section~\ref{sec:multinomial} gives the exact multinomial analysis.
Section~\ref{sec:seq} proves the stepwise optimality of the constant
exponent. Section~\ref{sec:numerics} confirms the closed forms by
exact computation for Bernoulli data, and
Section~\ref{sec:discussion} concludes. Proofs are collected in the
appendices.

\section{Setting, the Power Prior, and the Predictive Criterion}
\label{sec:setup}

Let $\{p_\vartheta(y \mid x)\}$ be a parametric family of densities
for a response~$y$ observed at a covariate value~$x$, indexed by a
parameter $\vartheta \in B \subseteq \mathbb{R}^d$. The covariates
are design points: fixed, known, and not random. Two concrete cases
are worth keeping in mind: a Bernoulli model, where~$y$ is a success
indicator and~$\vartheta$ is the success probability ($d = 1$), and a
generalized linear model, where~$y$ is a response measured at a
covariate vector~$x$ and~$\vartheta$ is a coefficient vector of dimension~$d$.

The historical dataset comprises~$N_0$ responses together with their
design points,
$\cD_0 = \bigl((x_{01}, y_{01}), \dots, (x_{0 N_0}, y_{0 N_0})\bigr)$,
where the responses are independent draws
$y_{0i} \sim p_{\theta_0}(\cdot \mid x_{0i})$ at the historical
design points. The current dataset
$\cD = \bigl((x_1, y_1),\allowbreak \dots,\allowbreak (x_N, y_N)\bigr)$
consists of~$N$ independent draws $y_t \sim p_\theta(\cdot \mid x_t)$ at the
current design points, with the current responses independent of the
historical ones. Here~$\theta_0$ and~$\theta$, both interior points of~$B$, are the
parameter values behind the historical study and the current data,
and the two need not coincide. The heterogeneity
between the two data-generating distributions is measured by the
average per-observation divergence under the historical design,
\begin{equation}
\label{eq:Dbar}
\Dbar
= \frac{1}{N_0} \sum_{i=1}^{N_0}
\KL{p_{\theta_0}(\cdot \mid x_{0i})}{p_{\theta}(\cdot \mid x_{0i})},
\end{equation}
where $\KL{f}{g} = \int f \ln(f/g)$ is the \emph{Kullback--Leibler
divergence}: the expected log-likelihood-ratio between~$f$ and~$g$
under data generated by~$f$. It is nonnegative and zero exactly when
the two distributions coincide. Divergences are measured in
\emph{nats}, the units of the natural logarithm. Definition~\eqref{eq:Dbar} averages the
per-observation divergence over the historical design. For i.i.d.\
observations it reduces to $D = \KL{p_{\theta_0}}{p_\theta}$.

The historical log-likelihood, the logarithm of the likelihood
$L(\vartheta \mid \cD_0)$ appearing in~\eqref{eq:pp}, is
\begin{equation*}
\ell_0(\vartheta) = \sum_{i=1}^{N_0} \ln p_\vartheta(y_{0i} \mid x_{0i}),
\end{equation*}
and the maximum likelihood estimate from the historical study is its
maximizer,
\begin{equation*}
\hat\theta_0 = \arg\max_{\vartheta \in B}\; \ell_0(\vartheta).
\end{equation*}
The Fisher information
of one observation at covariate~$x$ is
$I(\vartheta; x) = \E\bigl[-\nabla_\vartheta^2 \ln p_\vartheta(y \mid x)\bigr]$,
with $\nabla_\vartheta^2$ the Hessian in~$\vartheta$ and the
expectation over $y \sim p_\vartheta(\cdot \mid x)$: the expected curvature of the
log-likelihood, which measures how much information the observation
carries about the parameter. Its averages over the two designs,
\begin{equation*}
\bar I_0(\vartheta) = \frac{1}{N_0} \sum_{i=1}^{N_0} I(\vartheta;\, x_{0i}),
\qquad
\bar I(\vartheta) = \frac{1}{N} \sum_{t=1}^{N} I(\vartheta;\, x_t),
\end{equation*}
are the information matrices of the historical and current designs.

The power prior is~\eqref{eq:pp}, with~$\pi_0$ a fixed initial prior
positive and continuous at~$\theta_0$. The natural gauge of a prior's
strength is its \emph{effective sample size}: the number of
observations
that a sample would need in order to be as informative as the prior~\cite{Morita08}. For the power prior the count is immediate, since
raising the likelihood of~$N_0$ observations to the power~$a_0$
scales its information content by~$a_0$, so the borrowed effective
sample size is
\begin{equation}
E = a_0 N_0.
\end{equation}
The \emph{prior-predictive} distribution of the current data,
meaning the
distribution that the model assigns to them before they are observed,
obtained by averaging the sampling distribution over the prior, is
\begin{equation}
\label{eq:predictive}
q_{a_0}(y_1, \dots, y_N \mid \cD_0)
= \int_B \prod_{t=1}^{N} p_\vartheta(y_t \mid x_t)\;
\pi(\vartheta \mid \cD_0, a_0)\, d\vartheta,
\end{equation}
and the performance criterion is the \emph{predictive risk}: the
expected Kullback--Leibler divergence of~\eqref{eq:predictive} from
the true law of the current data, evaluated at the design stage,
averaged over the historical data,
\begin{equation}
\label{eq:risk}
R_N(a_0)
= \E_{\cD_0}\,
D\Bigl(\textstyle\prod_t p_\theta(\cdot \mid x_t)
\,\Big\|\, q_{a_0}(\cdot \mid \cD_0)\Bigr).
\end{equation}
In words, $R_N(a_0)$ is the expected number of extra nats of
log-loss incurred by predicting the current data from the
power-prior model rather than from the true distribution. The
average runs over the historical data, which shape the prior, and
over the current data, which are scored.

The same quantity has a sequential reading. By the chain rule,~\eqref{eq:risk} is the sum of one-step-ahead penalties: predict each
current observation from the historical data and all preceding
current observations, score it by its negative log predictive
probability, and add up the excess over the entropy. This cumulative
form is the prequential criterion of~\cite{Dawid84}. The log-loss
is
a strictly proper scoring rule, and the same quantity is the
redundancy of a compression code driven by~\eqref{eq:predictive}~\cite{KT81,ClarkeBarron90}. At the two extremes
of~$a_0$ the risk is classical. At~$a_0 = 0$ the predictive distribution is the
Bayes mixture under~$\pi_0$ and the risk is
$\frac d2 \ln N + O(1)$ by the Clarke--Barron theorem~\cite{ClarkeBarron90,XieBarron97}. At~$a_0 = 1$ with
$\theta_0 = \theta$ and matched designs, the historical data act as a
matched sample and the risk is
$\frac d2 \ln \frac{N_0 + N}{N_0} + O(1)$~\cite{KT81}. The question
is what happens between these poles when the historical and current
distributions differ.

\section{The Optimal Discounting Parameter}
\label{sec:general}

The regularity conditions are the standard ones for Laplace
expansions of Bayes procedures~\cite{ClarkeBarron90}.
\begin{itemize}
\item[(C.1)] \emph{Smoothness.} $\ln p_\vartheta(y \mid x)$ is three
times continuously differentiable in~$\vartheta$ near~$\theta_0$
and~$\theta$, with~$\bar I_0$ and~$\bar I$ positive definite and
converging under the growing designs.
\item[(C.2)] \emph{Asymptotic normality.} The estimate~$\hat\theta_0$
obeys
$\sqrt{N_0}(\hat\theta_0 - \theta_0) \Rightarrow
\mathcal{N}(0, \bar I_0(\theta_0)^{-1})$, with uniformly integrable Wilks
statistic.
\item[(C.3)] \emph{Two-scale regime.} $E \to \infty$ and~$E = o(N)$
hold, with the Laplace expansions of Appendix~\ref{app:general}
valid uniformly at the scales~$E^{-1/2}$ and~$N^{-1/2}$.
\end{itemize}

\begin{theorem}
\label{thm:general}
Under (C.1)--(C.3), the predictive risk of the power prior satisfies
\begin{equation}
\label{eq:generalrisk}
R_N(a_0)
= \frac{d}{2}\ln\frac{N}{a_0 N_0}
+ a_0 \Bigl(N_0 \Dbar + \frac{d}{2}\Bigr)
- \frac{d}{2}
+ \frac12 \ln \frac{\det \bar I(\theta)}{\det \bar I_0(\theta_0)}
+ o(1) + \epsilon,
\end{equation}
where~$\epsilon$ collects remainder terms of the orders stated in
Appendix~\ref{app:general}.
\end{theorem}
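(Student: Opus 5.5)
The plan is to use the standard Clarke--Barron decomposition of the predictive risk. We write the power prior as a density $\pi_{a_0}(\vartheta)=\pi(\vartheta\mid\cD_0,a_0)$, and for fixed $\cD_0$ we use the identity
\begin{equation*}
D\Bigl(\textstyle\prod_t p_\theta \,\Big\|\, q_{a_0}\Bigr)
= \E_{\cD}\Bigl[\ln \frac{\prod_t p_\theta(y_t\mid x_t)}{\int \prod_t p_\vartheta(y_t\mid x_t)\,\pi_{a_0}(\vartheta)\,d\vartheta}\Bigr].
\end{equation*}
We then apply a Laplace expansion of the marginal likelihood at scale $N^{-1/2}$ around the current MLE $\hat\theta$. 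Because (C.3) gives $E=o(N)$, the prior is flat on that scale, so
\begin{equation*}
-\ln\!\int \prod_t p_\vartheta\,\pi_{a_0}
= -\ell(\hat\theta)+\tfrac d2\ln\tfrac{N}{2\pi}+\tfrac12\ln\det\bar I(\theta)-\ln\pi_{a_0}(\hat\theta)+o(1).
\end{equation*}
Taking $\E_{\cD}$, the term $\E[\ell(\hat\theta)-\ell(\theta)]$ contributes the Wilks constant $\frac d2$, and $\ln\pi_{a_0}(\hat\theta)$ may be replaced by $\ln\pi_{a_0}(\theta)$ up to $o(1)$, again because the prior varies only on the scale $E^{-1/2}\gg N^{-1/2}$. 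This step reproduces the Clarke--Barron form
\begin{equation*}
\tfrac d2\ln\tfrac{N}{2\pi e}+\tfrac12\ln\det\bar I(\theta)-\E_{\cD_0}\ln\pi_{a_0}(\theta),
\end{equation*}
and the remaining task is to evaluate the log prior density at the true current parameter.

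Next we expand the normalizing constant of the power prior by a second Laplace approximation, at scale $E^{-1/2}$ around $\hat\theta_0$:
\begin{equation*}
\ln\pi_{a_0}(\theta)= -a_0\bigl[\ell_0(\hat\theta_0)-\ell_0(\theta)\bigr]+\ln\pi_0(\theta)-\ln\pi_0(\hat\theta_0)+\tfrac d2\ln\tfrac{a_0N_0}{2\pi}+\tfrac12\ln\det\bar I_0(\theta_0)+o(1).
\end{equation*}
Here we use that the observed information $-\nabla^2\ell_0(\hat\theta_0)/N_0$ converges to $\bar I_0(\theta_0)$ by (C.1)--(C.2). The key identity, the Wilks step, is
\begin{equation*}
\E\bigl[\ell_0(\hat\theta_0)-\ell_0(\theta)\bigr]
=\E\bigl[\ell_0(\hat\theta_0)-\ell_0(\theta_0)\bigr]+\E\bigl[\ell_0(\theta_0)-\ell_0(\theta)\bigr]
=\tfrac d2+N_0\Dbar+o(1).
\end{equation*}
The second expectation is exactly $N_0\Dbar$ by definition \eqref{eq:Dbar}. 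The first is the expected Wilks statistic, which tends to $\frac d2$ by the uniform integrability in (C.2).

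Substituting and collecting terms, the $2\pi$ factors cancel, and we obtain $\frac d2\ln\frac{N}{a_0N_0}-\frac d2+a_0(N_0\Dbar+\frac d2)+\frac12\ln\frac{\det\bar I(\theta)}{\det\bar I_0(\theta_0)}$. The prior terms $\ln\pi_0(\theta)-\ln\pi_0(\hat\theta_0)$ remain and do not vanish when $\theta\ne\theta_0$, and we assign them to $\epsilon$ together with the Laplace remainders. These remainders are $O(E^{-1/2})$ and $O(N^{-1/2})$, plus the cross-scale error $O(E/N)$ from treating the prior as flat.

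The main obstacle is the second Laplace step when $\theta$ is far from $\hat\theta_0$. If $N_0\Dbar$ is large, $\theta$ lies many prior standard deviations from the mode, so the quadratic approximation of the prior log-density at $\theta$ fails. What saves the argument is that the log-density is expanded exactly as $a_0[\ell_0(\hat\theta_0)-\ell_0(\theta)]$ plus the normalizer, so only the normalizer needs Laplace. We would therefore keep the likelihood-ratio term exact and never expand it, and we would apply the scale-separation argument of the first step only to the normalizer. The uniformity assumption in (C.3) is what lets us control the variation of $\pi_{a_0}$ on the $N^{-1/2}$ neighborhood of $\theta$. That variation is governed by the gradient $a_0\nabla\ell_0(\theta)$, whose size $a_0N_0\sqrt{\Dbar}$ relative to $\sqrt N$ we would also charge to $\epsilon$.
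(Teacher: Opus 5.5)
Your proposal follows the paper's derivation in Appendix~\ref{app:general}. Both use the Clarke--Barron expansion with the power prior as mixing prior (you re-derive it by a Laplace step at the current MLE rather than citing it), a Laplace approximation of the power prior's log-density at $\theta$, and the Wilks split $\E[\ell_0(\hat\theta_0)-\ell_0(\theta)] = \frac d2 + N_0\Dbar + o(1)$. Your one refinement is worth keeping: you apply Laplace only to the normalizer and leave $a_0[\ell_0(\hat\theta_0)-\ell_0(\theta)]$ exact, so your expansion stays valid when $\theta$ lies outside the $E^{-1/2}$ neighborhood of $\hat\theta_0$ on which \eqref{eq:laplaceprior} is stated, and it exposes the $a_0$-free term $\ln\pi_0(\theta)-\ln\pi_0(\hat\theta_0)$ (which, being free of $a_0$, does not move the optimum), which the paper treats as $o(1)$ but which is negligible only when $\theta\to\theta_0$ (e.g.\ when $E\Dbar = O(1)$).
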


The derivation is in Appendix~\ref{app:general}, and its central
step explains the structure of~\eqref{eq:generalrisk}. The Laplace
approximation of the power prior is a Gaussian centered at~$\hat\theta_0$ with precision~$E\,\bar I_0$. Its log-density at the
current parameter, relative to the mode, is the discounted historical
log-likelihood-ratio. Averaging over the historical data and
splitting at~$\theta_0$,
\begin{equation}
\label{eq:wilks}
\E\bigl[\ell_0(\hat\theta_0) - \ell_0(\theta)\bigr]
= \E\bigl[\ell_0(\hat\theta_0) - \ell_0(\theta_0)\bigr]
+ \E\bigl[\ell_0(\theta_0) - \ell_0(\theta)\bigr]
= \frac{d}{2} + N_0 \Dbar + o(1),
\end{equation}
the first term by Wilks' theorem~\cite{Wilks38}, which states that
the maximized log-likelihood exceeds its value at the true parameter
by~$\frac d2$ on average, and the second exactly, by definition~\eqref{eq:Dbar}.
Multiplied by~$a_0$, display~\eqref{eq:wilks} delivers the entire
cost of borrowing in one stroke. The bias $a_0 N_0 \Dbar$ is the
price of locating the prior at the wrong parameter. The noise~$a_0 \frac d2$ is the discounted Wilks excess of the historical fit.
The estimation term $\frac d2 \ln\frac{N}{E}$ and the~$-\frac d2$
come from the Clarke--Barron expansion against the concentrating
prior. The determinant term prices the curvature difference between
the two designs. It vanishes when designs and parameters agree to
second order, and it does not involve~$a_0$, so it shifts the risk
without moving the optimum.

\begin{corollary}[Optimal exponent and the harmonic law]
\label{cor:a0star}
For~$\Dbar > 0$, the~$a_0$-dependent part of~\eqref{eq:generalrisk}
is minimized at
\begin{equation}
\label{eq:a0star}
a_0^{*}
= \frac{d}{2 N_0 \Dbar + d},
\end{equation}
equivalently, the optimally borrowed effective sample size
$E^{*} = a_0^{*} N_0$ satisfies
\begin{equation}
\label{eq:harmonic}
\frac{1}{E^{*}}
= \frac{1}{N_0} + \frac{2\Dbar}{d}
= \frac{1}{N_0} + \frac{1}{E_\infty},
\qquad
E_\infty = \frac{d}{2\Dbar},
\end{equation}
and the attained risk is
\begin{equation}
\label{eq:minval}
R_N(a_0^{*})
= \frac{d}{2}\,
\ln\Bigl(\frac{N}{N_0} + \frac{2 \Dbar\, N}{d}\Bigr)
+ \frac12 \ln \frac{\det \bar I(\theta)}{\det \bar I_0(\theta_0)}
+ o(1) + \epsilon.
\end{equation}
\end{corollary}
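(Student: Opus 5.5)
The plan is to treat the expansion~\eqref{eq:generalrisk} of Theorem~\ref{thm:general} as given and minimize its $a_0$-dependent part,
\begin{equation*}
f(a_0) = -\frac{d}{2}\ln a_0 + a_0\Bigl(N_0\Dbar + \frac d2\Bigr),
\end{equation*}
over $a_0\in(0,1]$. The remaining terms, namely $\frac d2\ln\frac{N}{N_0}$, $-\frac d2$, and the determinant term, do not involve $a_0$ and are carried along unchanged. The function $f$ is strictly convex on $(0,\infty)$, since $f''(a_0)=\frac{d}{2a_0^2}>0$. It also tends to $+\infty$ as $a_0\to 0^+$. Hence it has a unique stationary point, and that point is the global minimizer. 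Setting
\begin{equation*}
f'(a_0)=-\frac{d}{2a_0}+N_0\Dbar+\frac d2=0
\end{equation*}
gives $a_0^{*}=d/(2N_0\Dbar+d)$, which is~\eqref{eq:a0star}. For $\Dbar>0$ this value lies strictly inside $(0,1)$, so the constraint $a_0\le 1$ is inactive and the unconstrained minimizer is admissible. At $\Dbar=0$ one gets $a_0^{*}=1$ on the boundary, consistent with the endpoint remark.

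For the harmonic law~\eqref{eq:harmonic}, multiply by $N_0$ and invert:
\begin{equation*}
\frac{1}{E^{*}}=\frac{2N_0\Dbar+d}{dN_0}=\frac1{N_0}+\frac{2\Dbar}{d}.
\end{equation*}
Naming $E_\infty=d/(2\Dbar)$ puts this in the form stated.

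For the attained value~\eqref{eq:minval}, substitute back. The first-order condition gives $a_0^{*}(N_0\Dbar+\frac d2)=\frac d2$, and this exactly cancels the $-\frac d2$ constant. What remains is
\begin{equation*}
\frac d2\ln\frac{N}{E^{*}}=\frac d2\ln\Bigl(\frac N{N_0}+\frac{2\Dbar N}{d}\Bigr),
\end{equation*}
using the harmonic form of $1/E^{*}$, plus the determinant term and the remainders $o(1)+\epsilon$.

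The only step needing care is the legitimacy of optimizing an asymptotic expansion. I would argue as follows. The expansion holds under (C.3), so $E^{*}\to\infty$ and $E^{*}=o(N)$ must hold along the regime considered. This is true when $N_0\to\infty$, since $E^{*}\to\min(N_0,E_\infty)$ up to constants, provided $\Dbar$ is small enough, or $d/(2\Dbar)$ large enough, that the cap grows. I would therefore state the corollary as minimizing the displayed $a_0$-dependent part, as the statement does. I would then note that because $f$ is strictly convex with curvature $d/(2a_0^{2})$ at the optimum, any perturbation of $f$ of size $o(1)+\epsilon$ moves the risk-minimizing $a_0$ only by a relatively small amount. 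The attained risk is then correct to the same order.
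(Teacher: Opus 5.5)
Your proposal is correct and follows the paper's proof essentially line for line: strict convexity of $f(a_0) = -\frac d2 \ln a_0 + a_0(N_0\Dbar + \frac d2)$, the unique stationary point giving \eqref{eq:a0star} and \eqref{eq:harmonic}, and the cancellation $a_0^{*}(N_0\Dbar + \frac d2) = \frac d2$ against the $-\frac d2$ that yields \eqref{eq:minval}. Your two additions go beyond the paper's argument without changing it: the check that $a_0^{*}\in(0,1)$, so the constraint $a_0\le 1$ is inactive, and the caveat that (C.3) needs $E^{*}\to\infty$, which in turn needs the cap $d/(2\Dbar)$ to grow.
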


\begin{proof}
The~$a_0$-dependent part of~\eqref{eq:generalrisk} is
$f(a_0) = -\frac{d}{2}\ln a_0 + a_0(N_0 \Dbar + \frac d2)$, with
$f''(a_0) = \frac{d}{2 a_0^2} > 0$, so~$f$ is strictly convex
on~$(0, \infty)$ and its unique stationary point is the global
minimum. Setting $f'(a_0) = 0$ gives~\eqref{eq:a0star}
and~\eqref{eq:harmonic}. At the optimum
$a_0^{*}(N_0 \Dbar + \frac d2) = \frac d2$, which cancels the~$-\frac d2$, leaving $\frac d2 \ln\frac{N}{E^{*}}$, which is~\eqref{eq:minval}.
\end{proof}

Law~\eqref{eq:harmonic} adds reciprocal effective sizes the way
precisions of independent noise sources add. The historical dataset
is worth at most its own size~$N_0$. The heterogeneity acts as a
second, independent corruption of equivalent size
$E_\infty = d/(2\Dbar)$: a discrepancy of~$\Dbar$ nats per
observation limits trust to~$d/(2\Dbar)$ observations, no matter how
many were collected. The trustworthy effective size is the harmonic
combination of the two, always below both. Equivalently,
$a_0^{*} = E_\infty/(E_\infty + N_0)$ is a \emph{credibility
weight} in the
actuarial sense of B\"uhlmann~\cite{Buhlmann67}.

The optimum is also flat. Setting the exponent to~$c\, a_0^{*}$
changes~\eqref{eq:generalrisk} by $\frac d2 (c - 1 - \ln c)$ nats,
about~$0.15\, d$ nats at~$c = 2$, so an order-of-magnitude estimate
of~$\Dbar$ suffices in applications.

Two remarks connect~\eqref{eq:a0star}--\eqref{eq:minval} to the
questions raised in the introduction.

\begin{remark}[The endpoints, and the degeneracy of adaptive
borrowing]
\label{rem:endpoints}
At~$\Dbar = 0$,~\eqref{eq:a0star} gives~$a_0^{*} = 1$ and~\eqref{eq:minval} reduces to the matched-sample risk
$\frac d2 \ln\frac{N}{N_0}$: compatible historical data should be
pooled in full. As $N_0 \to \infty$ at fixed~$\Dbar > 0$,
$a_0^{*} \sim d/(2 N_0 \Dbar) \to 0$ while
$E^{*} \to E_\infty = d/(2\Dbar)$. Read against the results of~\cite{Shen26}, this splits their degeneracy theorem in two. The
convergence of the posterior of~$a_0$ to a point mass at zero under
any discrepancy tracks the optimum: the exponent should vanish at
exactly this rate, and the borrowed effective size stabilizes at~$d/(2\Dbar)$ rather than disappearing. The failure to concentrate at~$a_0 = 1$ under full compatibility, by contrast, is a departure from
the optimum, and~\eqref{eq:minval} prices it: operating at~$a_0 < 1$ when~$\Dbar = 0$ forfeits
$\frac d2 \ln\frac{1}{a_0} - \frac d2 (1 - a_0)$ nats. The law~\eqref{eq:harmonic} therefore provides the benchmark that adaptive
borrowing schemes can be evaluated against.
\end{remark}

\begin{remark}[The price of full borrowing]
\label{rem:fullborrow}
Setting~$a_0 = 1$ instead of~$a_0^{*}$ under heterogeneity costs
$N_0 \Dbar - \frac d2 \ln\bigl(1 + \frac{2 N_0 \Dbar}{d}\bigr)$
nats, which grows linearly in~$N_0 \Dbar$. A large incompatible
historical study used at full weight is far worse than no historical
data at all, while the correctly discounted one is near-optimal
(Section~\ref{sec:numerics}).
\end{remark}

The next comparison isolates what discounting itself contributes,
relative to the alternative of achieving the same effective size by
using less data.

\begin{corollary}[Discounting dominates subsetting]
\label{cor:subset}
Fix a target effective size~$E \leq N_0$. Full borrowing
($a_0 = 1$) from a subset of~$E$ historical observations drawn from
the same design attains, under (C.1)--(C.3),
\begin{equation}
R^{\mathrm{sub}}_N(E)
= \frac{d}{2}\ln\frac{N}{E} + E\, \Dbar
+ \frac12 \ln \frac{\det \bar I(\theta)}{\det \bar I_0(\theta_0)}
+ o(1) + \epsilon,
\end{equation}
by~\eqref{eq:generalrisk} with~$N_0$ replaced by~$E$ and~$a_0 = 1$,
and its minimum over~$E$, by the same convexity argument as in
Corollary~\ref{cor:a0star}, is
$\frac d2 \ln\frac{2 e \Dbar N}{d}$ plus the determinant term,
attained at $E = \frac{d}{2\Dbar}$. Discounting the full dataset with
exponent~\eqref{eq:a0star} attains~\eqref{eq:minval}, lower by
\begin{equation}
\label{eq:gain}
\frac{d}{2}
\Bigl[1 - \ln\Bigl(1 + \frac{d}{2 \Dbar\, N_0}\Bigr)\Bigr]
+ o(1)
\;\longrightarrow\;
\frac{d}{2}
\quad (N_0 \to \infty).
\end{equation}
\end{corollary}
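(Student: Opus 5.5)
The plan is to obtain the subsetting risk directly from Theorem~\ref{thm:general}, minimize it by the same one-variable convexity argument as in Corollary~\ref{cor:a0star}, and subtract~\eqref{eq:minval}.

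\textbf{Step 1: the subset risk.} A subset of $E$ historical observations drawn from the same design is itself a historical dataset of size $E$. Its empirical design averages coincide with those of the full design, to the order absorbed in $o(1)$. Hence its average divergence~\eqref{eq:Dbar} is again $\Dbar$ and its information matrix is again $\bar I_0(\theta_0)$. Applying Theorem~\ref{thm:general} with $N_0$ replaced by $E$ and $a_0=1$ gives
\[
\tfrac d2\ln\tfrac NE + \bigl(E\Dbar + \tfrac d2\bigr) - \tfrac d2 + \tfrac12\ln\tfrac{\det\bar I(\theta)}{\det\bar I_0(\theta_0)} + o(1)+\epsilon .
\]
The two $\frac d2$ terms cancel, and this is exactly the displayed $R^{\mathrm{sub}}_N(E)$. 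The Wilks excess $\frac d2$ of the smaller historical fit is now undiscounted, which is precisely why it cancels against $-\frac d2$ here rather than partially surviving.

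The one point needing care is that (C.1)--(C.3) remain valid for the subset. The effective size is now $E$ itself, so I need $E\to\infty$ and $E=o(N)$. I also need the Wilks uniform integrability of (C.2) for the estimate computed from $E$ observations. I would argue that (C.2) is a statement about any growing sample from the historical design, so it holds at size $E$ as well. I would also note that the Laplace expansions of Appendix~\ref{app:general} depend on the prior only through its precision $E\bar I_0$ and its location, so they carry over verbatim. I expect this verification, rather than the algebra, to be the main obstacle.

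\textbf{Step 2: minimize over $E$.} The $E$-dependent part is $g(E)=-\frac d2\ln E+E\Dbar$. It has $g''(E)=d/(2E^2)>0$, so $g$ is strictly convex and has a unique stationary point $E=d/(2\Dbar)=E_\infty$. Substituting, $E_\infty\Dbar=\frac d2$, so the minimum value is
\[
\tfrac d2\ln\tfrac{N}{E_\infty}+\tfrac d2=\tfrac d2\ln\tfrac{2e\Dbar N}{d},
\]
plus the determinant term.

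\textbf{Step 3: the gain.} Subtracting~\eqref{eq:minval}, the determinant terms and the $o(1)$ terms cancel, leaving
\[
\tfrac d2\Bigl[1+\ln\tfrac{2\Dbar}{d}-\ln\Bigl(\tfrac1{N_0}+\tfrac{2\Dbar}{d}\Bigr)\Bigr]
=\tfrac d2\Bigl[1-\ln\Bigl(1+\tfrac{d}{2\Dbar N_0}\Bigr)\Bigr].
\]
This is~\eqref{eq:gain}, and it tends to $\frac d2$ as $N_0\to\infty$. It is positive whenever the unconstrained optimum is feasible, i.e.\ $E_\infty\le N_0$. In that case $d/(2\Dbar N_0)\le1$, so the logarithm is at most $\ln2<1$.

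\textbf{Step 4: the constrained case.} If instead $E_\infty>N_0$, the constraint $E\le N_0$ binds and the subset optimum is $E=N_0$, which is simply full borrowing of the whole dataset. Discounting still wins, because $a_0=1$ is feasible for $f$ in Corollary~\ref{cor:a0star}. Since $a_0^{*}<1$ whenever $\Dbar>0$, strict convexity of $f$ makes the inequality strict. In this regime I would state dominance only, without the closed form~\eqref{eq:gain}, since the limit $N_0\to\infty$ in~\eqref{eq:gain} is taken at fixed $\Dbar$, where the unconstrained case eventually applies.
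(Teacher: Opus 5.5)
Your proposal is correct and follows the paper's own argument: substitute $N_0 \to E$, $a_0 = 1$ into Theorem~\ref{thm:general}, minimize $-\frac d2 \ln E + E\Dbar$ by convexity, and subtract~\eqref{eq:minval}. Your Step~4 is a worthwhile refinement that the paper leaves implicit. It covers the case $d/(2\Dbar) > N_0$, where the constraint $E \le N_0$ binds and the subset optimum reduces to full borrowing $a_0 = 1$; there the closed form~\eqref{eq:gain} would compare against an infeasible subset size and can even turn negative.
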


The gain has two sources, visible in~\eqref{eq:generalrisk}. A kept
subset of size~$E$ carries the full Wilks noise of a sample of~$E$
observations. The discounted fit of the whole dataset carries the
reduced noise $a_0 \frac d2 < \frac d2$ at the same effective size.
The remaining logarithmic difference comes from the sharper location
of the prior that the discarded observations would have provided.
This is a quantitative form of the folklore preference for
discounting over data selection.

\section{Exact Analysis for the Multinomial Family}
\label{sec:multinomial}

For one family the expansion~\eqref{eq:generalrisk} can be
established exactly, without conditions (C.1)--(C.3). The result has
explicit constants and holds at every ratio of borrowed to current
information. Observations take values in
$\cY = \{1, \dots, m\}$, $p_\theta(y) = \theta_y$ is the multinomial
model on the interior of the simplex, $d = m - 1$, and there are no
covariates, so $\Dbar = D = \KL{p_{\theta_0}}{p_\theta}$. Write
$\mathrm{Dir}(\alpha_1, \dots, \alpha_m)$ for the Dirichlet
distribution on the simplex, with density proportional to
$\prod_y \theta_y^{\alpha_y - 1}$. The initial prior is the Jeffreys
prior $\pi_0 = \mathrm{Dir}(\tfrac12, \dots, \tfrac12)$, with which
the power prior is conjugate and explicit,
\begin{equation}
\label{eq:pp-dir}
\pi(\theta \mid \cD_0, a_0)
= \mathrm{Dir}\bigl(a_0 c_1 + \tfrac12, \dots,
a_0 c_m + \tfrac12\bigr),
\end{equation}
where~$c_y$ counts the value~$y$ in the historical data. The
predictive distribution~\eqref{eq:predictive} is the
Dirichlet--multinomial,
realized sequentially by the one-step probabilities
\begin{equation}
\label{eq:seq}
q_{a_0}(y_{t} = y \mid \cD_0,\, y_1 \cdots y_{t-1})
= \frac{a_0\, c_y + k_y(y_1 \cdots y_{t-1}) + \frac12}
       {E + t - 1 + \frac m2},
\end{equation}
with~$k_y(\cdot)$ the current counts: the historical counts enter
once, discounted, and each current observation thereafter enters at
full weight.

The exact analysis is organized by the \emph{skewed Jensen--Shannon
divergence}~\cite{Lin91}. Write
$H(p) = -\sum_y p(y) \ln p(y)$ for the Shannon entropy of a
distribution~$p$, in nats, and, for $\gamma \in [0, 1]$, let
$M_\gamma = \gamma\, p_{\theta_0} + (1 - \gamma)\, p_\theta$ denote
the mixture of the two data-generating distributions. The skewed
Jensen--Shannon divergence between~$p_{\theta_0}$ and~$p_\theta$, at
skew~$\gamma$, is defined by
\begin{equation}
\label{eq:jsdef}
\JS_\gamma(p_{\theta_0}, p_\theta)
= H(M_\gamma) - \gamma H(p_{\theta_0}) - (1 - \gamma) H(p_\theta).
\end{equation}
By concavity of the entropy,
$\JS_\gamma(p_{\theta_0}, p_\theta) \geq 0$, with equality only when
the two distributions coincide or $\gamma \in \{0, 1\}$. It measures the information that
one observation carries about which of the two distributions
produced it, when the two are mixed in proportions~$\gamma$ and~$1 - \gamma$. It arises here because the pseudo-counts held by the predictive
distribution at any stage form exactly such a mixture of historical
and current observations. The properties needed are collected in the
following lemma, proved in Appendix~\ref{app:lemma}.

\begin{lemma}
\label{lem:js}
For interior $\theta_0, \theta$ and~$\lambda > 0$, with
$\gamma_s = \lambda/(\lambda + s)$ for~$s \geq 0$:
\begin{align}
\frac{d}{ds}\,
\Bigl[(\lambda + s)\,\JS_{\gamma_s}(p_{\theta_0}, p_\theta)\Bigr]
&= \KL{p_\theta}{M_{\gamma_s}},
\label{eq:diffid}
\\
\lim_{\gamma \to 0}\, \frac{\JS_\gamma(p_{\theta_0}, p_\theta)}{\gamma}
&= \KL{p_{\theta_0}}{p_\theta} = D,
\label{eq:jslimit}
\\
\KL{p_\theta}{M_\gamma}
&= \gamma^2 D_2 + O(\gamma^3),
\qquad
D_2 = \frac12 \sum_y
\frac{(p_{\theta_0}(y) - p_\theta(y))^2}{p_\theta(y)},
\label{eq:quaddecay}
\end{align}
and $D_2 = D\,(1 + O(\|p_{\theta_0} - p_\theta\|))$.
\end{lemma}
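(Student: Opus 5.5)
All four claims are finite-dimensional calculus on the simplex. I would work directly from definition~\eqref{eq:jsdef}, writing $p_0 = p_{\theta_0}$ and $p = p_\theta$. Throughout I use $M_\gamma = p + \gamma(p_0 - p)$ and the identity
\[
H(M_\gamma) - \gamma H(p_0) - (1-\gamma)H(p) = \gamma\,\KL{p_0}{M_\gamma} + (1-\gamma)\,\KL{p}{M_\gamma}.
\]
This follows by expanding $-\sum M_\gamma \ln M_\gamma$ and splitting $M_\gamma$ into its two components. It writes $\JS_\gamma$ as a weighted sum of divergences to the mixture, and every claim then reduces to differentiating this expression in $\gamma$.

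For the differential identity~\eqref{eq:diffid}, I set $F(s) = (\lambda+s)\JS_{\gamma_s}$ and substitute $\gamma_s(\lambda+s) = \lambda$ and $(1-\gamma_s)(\lambda+s) = s$. This gives
\[
F(s) = (\lambda+s)H(M_{\gamma_s}) - \lambda H(p_0) - s H(p).
\]
Differentiating, $F'(s) = H(M_{\gamma_s}) + (\lambda+s)\,\frac{d}{ds}H(M_{\gamma_s}) - H(p)$. Next, $\frac{d}{d\gamma}H(M_\gamma) = -\sum_y (p_0(y)-p(y))\ln M_\gamma(y)$; the $-\sum (p_0 - p)$ term vanishes because both are probability vectors. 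Also $d\gamma_s/ds = -\gamma_s/(\lambda+s)$. Together these give
\[
(\lambda+s)\,\frac{d}{ds}H(M_{\gamma_s}) = \gamma_s\sum_y (p_0(y)-p(y))\ln M_{\gamma_s}(y).
\]
Since $\gamma_s(p_0 - p) = M_{\gamma_s} - p$, this term equals $\sum_y M\ln M - \sum_y p\ln M = -H(M) - \sum_y p \ln M$. Substituting back,
\[
F'(s) = -\sum_y p\ln M_{\gamma_s} - H(p) = \sum_y p \ln(p/M_{\gamma_s}) = \KL{p}{M_{\gamma_s}},
\]
which is~\eqref{eq:diffid}.

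For the limit~\eqref{eq:jslimit}, I divide the identity above by $\gamma$:
\[
\frac{\JS_\gamma}{\gamma} = \KL{p_0}{M_\gamma} + \frac{1-\gamma}{\gamma}\,\KL{p}{M_\gamma}.
\]
Since $\theta$ is interior, $p > 0$ and $M_\gamma \to p$, so by continuity $\KL{p_0}{M_\gamma} \to \KL{p_0}{p} = D$. The second term is $O(\gamma)$ by the quadratic estimate~\eqref{eq:quaddecay}, so the limit is $D$. I would therefore prove~\eqref{eq:quaddecay} first.

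For~\eqref{eq:quaddecay}, write $\delta = p_0 - p$ and $u_y = \gamma\delta_y/p_y$. Then
\[
\KL{p}{M_\gamma} = -\sum_y p_y\ln(1+u_y).
\]
Taylor expansion gives $-\ln(1+u) = -u + u^2/2 + O(u^3)$, uniformly for $|u| \le 1/2$, which holds for small $\gamma$ because $\min_y p_y > 0$. The linear term $\sum_y p_y u_y = \gamma\sum_y \delta_y$ vanishes, and the quadratic term gives $\frac{\gamma^2}{2}\sum_y \delta_y^2/p_y = \gamma^2 D_2$. The cubic remainder is $O(\gamma^3)$ with a constant depending on $\min_y p_y$.

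For the final claim $D_2 = D\,(1+O(\|\delta\|))$, I expand $D = \sum_y p_{0,y}\ln(p_{0,y}/p_y)$ in $\delta$ with $p$ held fixed. Writing $v_y = \delta_y/p_y$, we have $D = \sum_y p_y(1+v_y)\ln(1+v_y)$. Using $(1+v)\ln(1+v) = v + v^2/2 + O(v^3)$ and $\sum_y p_y v_y = 0$, this gives $D = D_2 + O(\|\delta\|^3)$. Since $D_2 \ge \|\delta\|^2/(2\max_y p_y)$, the cubic error is $D_2\cdot O(\|\delta\|)$, which gives the claim.

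The main obstacle is bookkeeping rather than depth. In~\eqref{eq:diffid} the sign of $d\gamma_s/ds$ and the cancellation via $\gamma_s(p_0-p) = M_{\gamma_s} - p$ must be tracked carefully. The uniformity of the Taylor remainders depends on interiority of $\theta$ (bounded-away-from-zero $p_y$); interiority of $\theta_0$ is not actually needed.
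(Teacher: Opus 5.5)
Your proof is correct and follows the paper's route. For \eqref{eq:diffid} you differentiate the same expression $(\lambda+s)H(M_{\gamma_s}) - \lambda H(p_{\theta_0}) - sH(p_\theta)$, going through $\gamma$ by the chain rule where the paper uses $\frac{d}{d\nu}[\nu M_\gamma] = p_\theta$; you use the same mixture decomposition of $\JS_\gamma$ for \eqref{eq:jslimit}; and you use the same second-order Taylor expansions for \eqref{eq:quaddecay} and the $D_2$--$D$ relation, where your lower bound on $D_2$ makes the paper's terse last step precise. One small correction to your closing remark: interiority of $\theta_0$ is not needed for the Taylor steps, but it is used in \eqref{eq:diffid} at the endpoint $s = 0$, where $M_{\gamma_0} = p_{\theta_0}$ and both the derivative of $H$ and the finiteness of $\KL{p_\theta}{p_{\theta_0}}$ require $p_{\theta_0} > 0$.
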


\begin{theorem}
\label{thm:multinomial}
Let~$E = a_0 N_0$ and $\gamma = E/(E + N)$. Then, as~$E \to \infty$
with $a_0 \in (0, 1]$ fixed or vanishing,
\begin{equation}
\label{eq:main}
R_N(a_0)
= (E + N)\,\JS_\gamma(p_{\theta_0}, p_\theta)
+ \frac{d}{2}\,\ln\frac{E + N}{E}
- \frac{(1 - a_0)\, d}{2}\cdot\frac{N}{E + N}
+ O\Bigl(\frac{1}{\sqrt E}\Bigr) + O(1)_{\!*},
\end{equation}
where~$O(1)_{\!*}$ collects terms bounded uniformly in
$(a_0, N_0, N)$ and numerically small
(Section~\ref{sec:numerics}).
\end{theorem}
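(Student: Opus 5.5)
The plan is to use the sequential form of the risk. By the chain rule, $R_N(a_0)=\sum_{s=0}^{N-1}\E\bigl[-\ln q_{a_0}(y_{s+1}\mid \cD_0,y_1\cdots y_s)\bigr]-N\,H(p_\theta)$. Each one-step term is evaluated with the explicit predictive probabilities~\eqref{eq:seq}. At step $s=t-1$, the pseudo-count of symbol $y$ is $X_y=a_0c_y+k_y+\tfrac12$. Its mean is $(E+s)\,M_{\gamma_s}(y)+\tfrac12$ with $\gamma_s=E/(E+s)$, because $\E[a_0c_y]=E\,p_{\theta_0}(y)$ and $\E[k_y]=s\,p_\theta(y)$. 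So the pseudo-counts form exactly the mixture $M_{\gamma_s}$, as anticipated before Lemma~\ref{lem:js}. Each step's excess log-loss then splits into two parts. The first is the deterministic mismatch $\KL{p_\theta}{M_{\gamma_s}}$. The second is a fluctuation term $\sum_y p_\theta(y)\,\E\bigl[\ln\bigl((E+s)M_{\gamma_s}(y)/X_y\bigr)\bigr]+\ln\frac{E+s+m/2}{E+s}$.

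The deterministic part is summed with Lemma~\ref{lem:js}, taking $\lambda=E$. By~\eqref{eq:diffid}, $\KL{p_\theta}{M_{\gamma_s}}$ is the derivative in $s$ of $(E+s)\JS_{\gamma_s}$. This quantity vanishes at $s=0$, since $\gamma_0=1$, and equals $(E+N)\JS_\gamma$ at $s=N$. Hence $\sum_{s=0}^{N-1}\KL{p_\theta}{M_{\gamma_s}}=(E+N)\JS_\gamma+{}$an Euler--Maclaurin error. The summand is monotone in $s$ and bounded by $\KL{p_\theta}{p_{\theta_0}}$, so the error is bounded uniformly and goes into $O(1)_{*}$. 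The quadratic decay~\eqref{eq:quaddecay} shows it is in fact small once $s\gg E$.

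For the fluctuation part, expand $-\ln(X_y/\mu_y)$ to second order about its mean $\mu_y=(E+s)M_{\gamma_s}(y)$. The variance is $\mathrm{Var}(X_y)=a_0E\,p_{\theta_0}(y)(1-p_{\theta_0}(y))+s\,p_\theta(y)(1-p_\theta(y))$. The factor $a_0$ arises because $\mathrm{Var}(a_0c_y)=a_0^2N_0p_{\theta_0}(1-p_{\theta_0})$; this is where discounting reduces the Wilks noise. Now replace $M_{\gamma_s}$ and $p_{\theta_0}$ by $p_\theta$ at leading order. Since $\sum_y(1-p_\theta(y))=d$, the second-order term becomes $\frac{d\,(a_0E+s)}{2(E+s)^2}$. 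The first-order contributions of the $\tfrac12$ offsets in the numerator and the $\tfrac m2$ offset in the denominator cancel at order $1/(E+s)$; this is the Jeffreys/KT cancellation. The remainder is of order $(E+s)^{-2}$ plus mismatch corrections. Writing $a_0E+s=(E+s)-(1-a_0)E$ and summing by comparison with the integral gives
\[
\int_0^N\frac{d\,ds}{2(E+s)}-\frac{(1-a_0)\,d\,E}{2}\int_0^N\frac{ds}{(E+s)^2}=\frac d2\ln\frac{E+N}{E}-\frac{(1-a_0)\,d}{2}\cdot\frac{N}{E+N}.
\]
These are exactly the second and third terms of~\eqref{eq:main}.

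I expect the main obstacle to be controlling the remainders uniformly in $(a_0,N_0,N)$, especially when $p_{\theta_0}\neq p_\theta$. Replacing $p_\theta(y)/M_{\gamma_s}(y)$ and $p_{\theta_0}(1-p_{\theta_0})$ by their matched values introduces errors proportional to $\gamma_s\|p_{\theta_0}-p_\theta\|$ times $1/(E+s)$. Summed over $s$, these give an $O(1)$ contribution weighted by the heterogeneity. I would try to absorb this into $O(1)_{*}$, or cancel it against the $D_2$ versus $D$ discrepancy noted in Lemma~\ref{lem:js}. The $O(1/\sqrt E)$ term should come from the third- and higher-order terms of the Taylor expansion of the logarithm. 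To handle them, I would use a Chernoff/Bernstein concentration bound for $X_y$, together with the positivity floor $X_y\geq\tfrac12$, to control the lower tail. Third central moments of order $(E+s)$ over $\mu^3$ summed over $s$ give $O(1/E)$. However, the cruder uniform bound via moments of $|X_y-\mu_y|^3/\mu_y^3\sim(E+s)^{-3/2}$ gives $\sum_s(E+s)^{-3/2}=O(1/\sqrt E)$, which is the stated order. I would use that bound rather than pursue the sharper one.
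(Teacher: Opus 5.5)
Your proposal is correct and follows essentially the paper's route. You use the chain-rule decomposition into one-step terms and split each into $\KL{p_\theta}{M_{\gamma_s}}$ plus a second-order fluctuation term, in which the Jeffreys offsets cancel and the variance yields $\frac{d}{2}(a_0E+s)/(E+s)^2$; you then sum via identity~\eqref{eq:diffid} and an elementary integral. Your remainder bookkeeping is slightly sharper than the paper's: because the heterogeneity corrections carry the factor $\gamma_s$, i.e.\ are $O(E\|p_{\theta_0}-p_\theta\|/(E+s)^2)$, their sum is bounded, whereas the paper's cruder $O(\|p_{\theta_0}-p_\theta\|\ln\frac{E+N}{E})$ does not by itself give the claimed uniform boundedness in $N$.
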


The proof, in Appendix~\ref{app:multinomial}, decomposes the risk
into one-step terms by the chain rule, expands each term to second
order around the mean composition of the accumulated pseudo-counts,
and integrates using Lemma~\ref{lem:js}. In the regime
$1 \ll E \ll N$, expansion~\eqref{eq:main} reduces to~\eqref{eq:generalrisk} with~$\Dbar = D$, by~\eqref{eq:jslimit}.
Theorem~\ref{thm:multinomial} therefore instantiates
Theorem~\ref{thm:general} without its conditions, and it refines it
in two directions. It is global: the Jensen--Shannon form holds at
every~$E/N$, covering historical datasets comparable to or larger
than the current one. And it resolves the constants: the negative
third term, the \emph{noise dividend} of discounting, is exact,
and the
residual~$O(1)_{\!*}$ is numerically negligible
(Section~\ref{sec:numerics}).

One structural fact from the proof deserves mention, because it is
what makes a single formula~\eqref{eq:a0star} possible. The per-step
estimation penalty in~\eqref{eq:main} carries the coefficient~$\frac d2$ \emph{regardless of the heterogeneity}. The first-order
bias of the Jeffreys pseudo-counts exactly cancels the
discrepancy-dependent part of the quadratic variance term. The cost
of estimation is therefore the same whether the prior is centered
well or badly, and all of the heterogeneity is carried by the
divergence terms.

\section{Stepwise Optimality of the Constant Exponent}
\label{sec:seq}

Form~\eqref{eq:seq} realizes the power-prior predictive distribution
as a sequential rule: pseudo-counts start at $a_0 c_y + \tfrac12$ and grow
by one per current observation. Within this realization nothing
forces the exponent to be constant as the current data accrue. One
may apply an exponent~$a_{0,t}$ at step~$t$, recomputing the
discounted historical counts, and it is natural to ask whether a
decaying schedule outperforms the constant~$a_0^{*}$, by borrowing
more while current observations are scarce and less as they
accumulate. The answer is that no schedule helps.

\begin{theorem}[Stepwise optimality]
\label{thm:stepwise}
Let~$\rho_t(a_0)$ denote the expected excess log-loss of the~$t$-th
current observation under the predictive
distribution~\eqref{eq:seq} with exponent~$a_0$, averaged over the historical data and the preceding~$t - 1$
current observations. Then, to second order in
$p_{\theta_0} - p_\theta$ and to leading order in~$1/(E + t)$,
\begin{equation}
\label{eq:steprho}
\rho_t(a_0)
= \Bigl(\frac{E}{E + t - 1}\Bigr)^{2} D_2
+ \frac{d}{2}\cdot
\frac{a_0^2 N_0 + t - 1}{(E + t - 1)^2},
\qquad E = a_0 N_0,
\end{equation}
with~$D_2$ as in~\eqref{eq:quaddecay}, and for every~$t \geq 1$ the
minimizer~$a_0^{*}$ of~\eqref{eq:steprho} is the same value,
determined by
\begin{equation}
\label{eq:stepopt}
\frac{1}{a_0^{*} N_0} = \frac{1}{N_0} + \frac{2 D_2}{d},
\end{equation}
independent of~$t$. Since
$D_2 = D\,(1 + O(\|p_{\theta_0} - p_\theta\|))$, the stepwise optimum
coincides with~\eqref{eq:harmonic} in the small-heterogeneity
regime.
\end{theorem}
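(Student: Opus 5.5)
We compute $\rho_t(a_0)$ directly from the sequential form~\eqref{eq:seq}, by a second-order expansion of the one-step excess log-loss around the mean of the plug-in probabilities, and then minimize in closed form. Write $n = t-1$ and $\hat q_y = (a_0 c_y + k_y + \tfrac12)/(E + n + \tfrac m2)$ for the one-step predictive probability. The excess log-loss of step~$t$, averaged over the step-$t$ observation, is $\KL{p_\theta}{\hat q}$. It remains to average over $c$ (multinomial$(N_0,p_{\theta_0})$) and over $k$ (multinomial$(n,p_\theta)$), which are independent.

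\textbf{Mean and variance of $\hat q$.} To leading order in $1/(E+n)$,
\[
\E\hat q_y = \frac{E\,p_{\theta_0}(y) + n\,p_\theta(y)}{E+n} = M_{\gamma}(y),
\qquad \gamma = \frac{E}{E+n}.
\]
The Jeffreys pseudo-counts contribute an $O(1/(E+n))$ bias, which we track only through the second-order term. Independence of the two count vectors gives
\[
\mathrm{Var}\,\hat q_y = \frac{a_0^2 N_0\, p_{\theta_0}(y)(1-p_{\theta_0}(y)) + n\,p_\theta(y)(1-p_\theta(y))}{(E+n)^2}.
\]

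\textbf{Expanding the excess log-loss.} We expand $\KL{p_\theta}{\hat q}$ around $\bar q = \E\hat q$ to second order:
\[
\E\,\KL{p_\theta}{\hat q} \approx \KL{p_\theta}{\bar q} + \frac12\sum_y \frac{p_\theta(y)}{\bar q_y^{2}}\,\mathrm{Var}\,\hat q_y .
\]
The first (bias) term is $\KL{p_\theta}{M_\gamma} = \gamma^2 D_2$ by~\eqref{eq:quaddecay}. This yields the first summand of~\eqref{eq:steprho}.

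In the variance term, at zeroth order in $p_{\theta_0}-p_\theta$ each weight $p_\theta(y)/\bar q_y^2$ equals $1/p_\theta(y)$. The sums then collapse through $\sum_y (1-p_\theta(y)) = m-1 = d$, giving $\frac d2 (a_0^2N_0+n)/(E+n)^2$. This is the second summand.

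\textbf{Main obstacle.} The main difficulty is to show that the first-order heterogeneity corrections to this variance term do not survive at the stated order. These come from $\bar q_y \neq p_\theta(y)$ and $p_{\theta_0}(y)(1-p_{\theta_0}(y)) \neq p_\theta(y)(1-p_\theta(y))$. As remarked after Theorem~\ref{thm:multinomial}, they are cancelled by the first-order bias of the $\tfrac12$ pseudo-counts in the cross term of the expansion of $\KL{p_\theta}{\hat q}$. I would reuse that cancellation from the proof of Theorem~\ref{thm:multinomial}, Appendix~\ref{app:multinomial}, rather than redo it. Any leftover is either of third order in $p_{\theta_0}-p_\theta$ or of higher order in $1/(E+n)$, and both are excluded by the statement's approximation order.

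\textbf{Minimization.} With $E = a_0N_0$, so that $a_0^2N_0 = E^2/N_0$, the step risk becomes
\[
\rho_t = \frac{K E^2 + \frac d2 n}{(E+n)^2},
\qquad K = D_2 + \frac{d}{2N_0}.
\]
Differentiating in $E$, the numerator of $\partial_E\rho_t$ is
\[
2KE(E+n) - 2\bigl(KE^2 + \tfrac d2 n\bigr) = 2n\bigl(KE - \tfrac d2\bigr).
\]
For $n \ge 1$ this vanishes only at $E = d/(2K)$; it is negative below that point and positive above, so $E = d/(2K)$ is the unique minimizer. Rewriting, $1/(a_0^{*}N_0) = 2K/d = 1/N_0 + 2D_2/d$, which is~\eqref{eq:stepopt} and does not depend on $t$. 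Moreover $E^{*} \le N_0$, so $a_0^{*} \in (0,1]$ is admissible.

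For $t=1$ ($n=0$) the leading-order $\rho_1 = K$ is flat in $a_0$, so $a_0^{*}$ is trivially a minimizer there as well. I would state this explicitly as the reading of ``for every $t\ge1$''.

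Finally, substituting $D_2 = D(1+O(\|p_{\theta_0}-p_\theta\|))$ from Lemma~\ref{lem:js} turns~\eqref{eq:stepopt} into~\eqref{eq:harmonic} with $\Dbar = D$ in the small-heterogeneity regime.
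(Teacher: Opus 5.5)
Your route matches the paper's: the paper obtains \eqref{eq:steprho} from the per-step expansion \eqref{eq:steptwo} of Appendix~\ref{app:multinomial}, which your mean--variance expansion re-derives in outline, with $\KL{p_\theta}{M_{\gamma_s}} = \gamma_s^2 D_2$ taken from \eqref{eq:quaddecay}, and then solves the same stationarity condition, since your $E$-derivative is its $a_0$-derivative divided by $N_0$ and has numerator $2n(KE-\tfrac d2) = n\,(2a_0N_0D_2 + d\,a_0 - d)$. Your explicit treatment of $t=1$, where $\rho_1 = D_2 + d/(2N_0)$ is flat in $a_0$, covers a case the paper's proof (which fixes $s \geq 1$) passes over; the one point to correct is your account of the ``main obstacle'': the first-order discrepancy terms cancel identically only at $a_0 = 1$, and for $a_0 < 1$ the residual $\frac{\gamma_s(1-a_0)(E-s)}{2(E+s)^2}\sum_y \frac{p_{\theta_0}(y)-p_\theta(y)}{p_\theta(y)}$ survives, which is exactly the $O(\|p_{\theta_0}-p_\theta\|/(E+s))$ remainder kept in \eqref{eq:steptwo}, so it is dropped from \eqref{eq:steprho}, by you and the paper alike, because it is third order when $E$ is of order $1/D_2$ and not because it vanishes.
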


The proof is in Appendix~\ref{app:stepwise}. The content of the
theorem is that the anticipated benefit of a decaying schedule is
already delivered by Bayesian updating itself. Under the constant
exponent, the influence of the historical data on the predictive
distribution~\eqref{eq:seq} is the factor $E^{*}/(E^{*} + t - 1)$, which
decays on its own as current observations accumulate. Expression~\eqref{eq:steprho} shows this automatic decay to be exactly right:
the heterogeneity penalty falls quadratically in it, while the
estimation term sits at its per-step balance point throughout. A
manual decay of~$a_0$ would double-count the fade.

Two caveats bound the claim. Expression~\eqref{eq:steprho} is a
second-order statement, and at larger heterogeneity the exact
per-step minimizer drifts slowly with~$t$, which is immaterial by
flatness. And at~$t = O(1)$ there is a transient governed by the~$\tfrac12$ pseudo-counts of the Jeffreys initial prior, which pull the predictive
distribution toward the center of the simplex. The transient affects~$O(1)$ observations and~$O(1/E)$ nats.

Beyond the multinomial family, the theorem carries over verbatim
whenever $\bar I_0(\theta_0)$ and $\bar I(\theta)$ are proportional,
since the per-step expansion then scalarizes in the same way. Under
a genuine design shift the two matrices need not commute. The
per-step optimum then solves a trace equation in place of~\eqref{eq:stepopt}, and it reduces to~\eqref{eq:a0star} in the
proportional case.

The theorem also delimits its own scope. It concerns \emph{static}
heterogeneity, a fixed pair $(\theta_0, \theta)$. When the current
parameter drifts over time, discounting the \emph{current} counts
becomes rational as well, and decaying schemes regain a role. The
static case is the one the power-prior literature addresses, and
there the constant optimal exponent is unimprovable at this order.

\section{Numerical Illustration}
\label{sec:numerics}

For Bernoulli data ($m = 2$, $d = 1$) every quantity in
Sections~\ref{sec:multinomial}--\ref{sec:seq} is exactly computable:
the risk~\eqref{eq:risk} depends on the data only through the
success counts, and the Dirichlet--multinomial predictive
distribution is evaluated with
log-Gamma functions at the non-integer arguments
$a_0 c + \tfrac12$. No asymptotics and no simulation are involved.
Table~\ref{tab:a0star} compares prediction~\eqref{eq:a0star} with
the exact minimizer of~$R_N(a_0)$, and the attained minimum with the
closed form~\eqref{eq:minval}, across parameter pairs and historical
sample sizes at~$N = 4096$, with~$p_0$ and~$p$ the historical and
current success probabilities. The last column reports the exact risk
of full borrowing.

\begin{table}[ht]
\centering
\caption{Exact versus predicted optimal exponents, Bernoulli data,
$N = 4096$. Risk in nats, $D = \KL{p_{\theta_0}}{p_\theta}$ in nats.}
\label{tab:a0star}
\vspace{6pt}
\begin{tabular}{ccc|c|cc|cc|c}
\hline~$p_0$ & $p$ & $N_0$ & $D$ &
$a_0^{*}$~\eqref{eq:a0star} & $a_0^{*}$ exact &
$R_{\min}$ exact & $R_{\min}$~\eqref{eq:minval} &
$R(1)$ exact \\
\hline
0.30 & 0.50 &   50 & 0.0823 & 0.108  & 0.097  & 3.260 & 3.314 & 6.28 \\
0.30 & 0.50 &  200 & 0.0823 & 0.0295 & 0.0264 & 3.222 & 3.272 & 17.2 \\
0.30 & 0.50 & 1000 & 0.0823 & 0.0060 & 0.0054 & 3.212 & 3.260 & 67.4 \\
0.40 & 0.50 & 1000 & 0.0201 & 0.0242 & 0.0239 & 2.557 & 2.565 & 17.0 \\
0.45 & 0.50 & 1000 & 0.0050 & 0.0908 & 0.0895 & 1.913 & 1.905 & 4.84 \\
0.20 & 0.35 &  500 & 0.0542 & 0.0181 & 0.0170 & 3.019 & 3.057 & 25.5 \\
\hline
\end{tabular}
\end{table}

The exact minimizers track~\eqref{eq:a0star} within~$2$--$11\%$.
The discrepancy shrinks as~$E^{*}$ grows, and it is immaterial by
flatness. The attained minima agree with~\eqref{eq:minval} within~$0.05$ nats throughout, and to~$0.01$ nats at the largest~$E^{*}$. The full-borrowing column quantifies
Remark~\ref{rem:fullborrow}: at~$N_0 = 1000$ and~$D = 0.0823$,
discounting saves~$64$ nats. A compatibility check confirms the~$D = 0$ endpoint, with the exact risk at $p_0 = p = 0.4$,
$N_0 = 100$, minimized at~$a_0 = 1$.

The stepwise claim of Theorem~\ref{thm:stepwise} was checked by
exact computation of the per-observation excess loss as a function
of~$E = a_0 N_0$ at~$N_0 = 1000$. For~$p_0 = 0.3$, $p = 0.5$
($E^{*} = 6.04$), the exact minimizer is~$E = 6.00$ at every
current history length from~$8$ through~$512$. Skewed pairs exhibit
the slow drift and the small-sample transient described after
Theorem~\ref{thm:stepwise}.

\section{Discussion}
\label{sec:discussion}

The primary object of this paper is a law. Since the power prior was
introduced in~\cite{IbrahimChen00}, it has carried an open question: how
should the discounting parameter depend on the heterogeneity it was
meant to express? Under the predictive log-loss, the question has a
closed-form answer:
$a_0^{*} = d/(2 N_0 \Dbar + d)$, with the harmonic combination
$1/E^{*} = 1/N_0 + 2\Dbar/d$ as its effective-sample-size
expression. The law holds for smooth parametric families under
Laplace-level regularity. It is anchored by the multinomial family,
where the risk expansion is exact and confirmed numerically at the
level of the constants. Around the law sit three subsidiary results:
the benchmark reading of the degeneracy of adaptive borrowing, the~$d/2$-nat dominance of discounting over data selection, and the
stepwise optimality that makes decaying the exponent unnecessary
under static heterogeneity.

For practice, the law converts a tuning problem into a measurement
problem. The two inputs it requires are the historical sample size,
which is known, and the heterogeneity~$\Dbar$, which is estimable.
With several historical studies, $\Dbar$ can be estimated from their
spread. In design-stage simulation, it comes from the divergence
scenarios that sensitivity analyses already posit. The flatness of
the optimum makes the required accuracy modest: an estimate of~$\Dbar$ correct to an order of magnitude places the risk within a
fraction of a nat of the minimum. The resulting rule is a one-line
prescription where current practice fits hyperpriors or searches
grids. Remark~\ref{rem:endpoints} also makes it a benchmark: an
adaptive scheme is working exactly to the extent that its realized
borrowing tracks the harmonic law.

The result has three limitations. First, the rigor is uneven
across families: exact for multinomial data, Laplace-level under
(C.1)--(C.3) for general smooth parametric families. Survival models
under censoring,
random designs, and hierarchical variants of the applied literature~\cite{IbrahimChen15} are outside the present scope. Second, the
objective is predictive. Estimation objectives, such as the
mean-squared error of a posterior functional, lead to a structurally
similar trade-off with a different constant, and working it out is a
natural next step. Third, the heterogeneity enters as an input
rather than being learned within the procedure. A fully adaptive
treatment that estimates~$\Dbar$ and accounts for the estimation
error inside the risk remains open, and the degeneracy results of~\cite{Shen26} caution that hierarchical constructions do not
automatically land on the optimum.

The mechanics of the analysis originate in universal source coding,
where the same trade-off arises when a compression code is trained
on data from a mismatched source~\cite{ReznikAnisimov03}, and where
the cap $E_\infty = d/(2D)$ first appeared as an optimal training
length. Code redundancy is the cumulative predictive log-loss, so
the transfer is exact, and it runs in both directions: the results
established here translate back into statements about weighting
training data in adaptive coders. That two literatures, one built
around clinical trials and one around data compression, arrive at
the same harmonic law suggests that the law is a property of partial
borrowing under log-loss as such, rather than of either
application.

\appendix

\section{Derivation of Theorem~\ref{thm:general}}
\label{app:general}

By the Laplace approximation under (C.1)--(C.2), the power prior~\eqref{eq:pp} admits the density expansion, uniformly on compact
neighborhoods of~$\hat\theta_0$ at scale~$E^{-1/2}$,
\begin{equation}
\label{eq:laplaceprior}
\ln \pi(\vartheta \mid \cD_0, a_0)
= \frac{d}{2}\ln\frac{E}{2\pi}
+ \frac12 \ln \det \bar I_0(\hat\theta_0)
- a_0\bigl[\ell_0(\hat\theta_0) - \ell_0(\vartheta)\bigr]
+ o(1),
\end{equation}
since~$a_0 \ell_0$ has Hessian $-E\, \bar I_0(\hat\theta_0)(1 + o(1))$
at its maximizer and the fixed initial prior contributes~$o(1)$ on
the~$E^{-1/2}$ scale. In the two-scale regime (C.3), the prior~\eqref{eq:laplaceprior} varies on a scale~$E^{-1/2}$ that is large
relative to the posterior scale~$N^{-1/2}$, so the Clarke--Barron
expansion of the predictive risk against a smooth prior~$w$~\cite{ClarkeBarron90},
\begin{equation}
D\Bigl(\textstyle\prod_t p_\theta(\cdot \mid x_t)
\,\Big\|\, q_w\Bigr)
= \frac{d}{2}\ln\frac{N}{2\pi e}
+ \ln \frac{\sqrt{\det \bar I(\theta)}}{w(\theta)}
+ o(1),
\end{equation}
applies with $w(\cdot) = \pi(\cdot \mid \cD_0, a_0)$, conditionally on
the historical data. Substituting~\eqref{eq:laplaceprior} at
$\vartheta = \theta$ and collecting the deterministic terms,
\begin{equation}
\label{eq:generalcond}
D\Bigl(\textstyle\prod_t p_\theta(\cdot \mid x_t)
\,\Big\|\, q_{a_0}(\cdot \mid \cD_0)\Bigr)
= \frac{d}{2}\ln\frac{N}{E}
- \frac{d}{2}
+ a_0\bigl[\ell_0(\hat\theta_0) - \ell_0(\theta)\bigr]
+ \frac12 \ln
\frac{\det \bar I(\theta)}{\det \bar I_0(\hat\theta_0)}
+ o(1).
\end{equation}
Taking the expectation over the historical data: the middle bracket
gives~\eqref{eq:wilks}, by Wilks' theorem for the first difference
under (C.2) and by
$\E[\ln p_{\theta_0}(y_{0i} \mid x_{0i}) - \ln p_\theta(y_{0i} \mid
x_{0i})] = \KL{p_{\theta_0}(\cdot \mid x_{0i})}{p_\theta(\cdot \mid
x_{0i})}$ summed over the design for the second, and
$\hat\theta_0 \to \theta_0$ replaces the argument of the
determinant. This yields~\eqref{eq:generalrisk}, with~$\epsilon$
absorbing the uniformity losses of the two expansions: the~$o(1)$
remainders of~\eqref{eq:laplaceprior} and the Clarke--Barron step,
and the~$O(E/N)$ terms neglected in the regime (C.3). \qed

\section{Proof of Lemma~\ref{lem:js}}
\label{app:lemma}

For~\eqref{eq:diffid}, write $\nu = \lambda + s$ and
$F(\nu) = \nu H(M_\gamma) - \lambda H(p_{\theta_0})
- (\nu - \lambda) H(p_\theta)$, with
$M_\gamma(y) = (\lambda\, p_{\theta_0}(y)
+ (\nu - \lambda)\, p_\theta(y))/\nu$, so that
$F(\nu) = \nu\,\JS_{\lambda/\nu}(p_{\theta_0}, p_\theta)
+ \text{const}$ differs from the
bracket in~\eqref{eq:diffid} by a term with zero derivative. Since
$\frac{d}{d\nu}[\nu M_\gamma(y)] = p_\theta(y)$,
\begin{equation*}
F'(\nu)
= H(M_\gamma)
- \sum_y \bigl(p_\theta(y) - M_\gamma(y)\bigr)
\bigl(\ln M_\gamma(y) + 1\bigr)
- H(p_\theta)
= -\sum_y p_\theta(y) \ln M_\gamma(y) - H(p_\theta),
\end{equation*}
which is $\KL{p_\theta}{M_\gamma}$.

For~\eqref{eq:jslimit}, use the identity
$\JS_\gamma(p_{\theta_0}, p_\theta)
= \gamma\,\KL{p_{\theta_0}}{M_\gamma}
+ (1 - \gamma)\,\KL{p_\theta}{M_\gamma}$, obtained by expanding~\eqref{eq:jsdef}. As~$\gamma \to 0$, $M_\gamma \to p_\theta$
pointwise, so $\KL{p_{\theta_0}}{M_\gamma} \to D$ by continuity
(finite by the interior assumption), while
$\KL{p_\theta}{M_\gamma} = O(\gamma^2)$ since it is a smooth function
of~$\gamma$ vanishing to second order at~$\gamma = 0$. Dividing by~$\gamma$ gives the claim.

For~\eqref{eq:quaddecay}, write
$M_\gamma - p_\theta = \gamma\,(p_{\theta_0} - p_\theta)$ and expand
$\KL{p_\theta}{M_\gamma}$ to second order about
$M_\gamma = p_\theta$: the first-order term vanishes because
$\sum_y (M_\gamma(y) - p_\theta(y)) = 0$, and the second-order term
is the~$\chi^2$-type quadratic~$\gamma^2 D_2$. The relation between~$D_2$ and~$D$ follows by expanding both to second order in
$p_{\theta_0} - p_\theta$. \qed

\section{Proof of Theorem~\ref{thm:multinomial}}
\label{app:multinomial}

Because the Dirichlet--multinomial predictive distribution is a
mixture, the chain
rule factors the risk into per-observation terms,
\begin{equation}
\label{eq:chain}
R_N(a_0)
= \sum_{t=1}^{N} \rho_t,
\qquad
\rho_t = \E\,\KL{p_\theta}
{q_{a_0}(\cdot \mid \cD_0, Y_1 \cdots Y_{t-1})},
\end{equation}
with the expectation over $\cD_0 \sim p_{\theta_0}^{N_0}$ and
$Y_1 \cdots Y_{t-1} \sim p_\theta^{t-1}$, where~$p^{j}$ denotes the~$j$-fold product. Fix~$t$ and write~$s = t - 1$. The predictive
distribution~\eqref{eq:seq} is
$\hat p_y = (a_0 c_y + k_y + \tfrac12)/(E + s + \tfrac m2)$, with
mean composition
\begin{equation}
\label{eq:meancomp}
\bar p_y
= \frac{E\, p_{\theta_0}(y) + s\, p_\theta(y)}{E + s}
+ O\Bigl(\frac{1}{E + s}\Bigr)
= M_{\gamma_s}(y) + O\Bigl(\frac{1}{E + s}\Bigr),
\qquad
\gamma_s = \frac{E}{E + s},
\end{equation}
and coordinate fluctuations $\delta_y = \hat p_y - \bar p_y$
satisfying, by independence of the two count sets,
\begin{equation}
\label{eq:var}
\mathrm{Var}(\delta_y)
= \frac{a_0^2 N_0\, p_{\theta_0}(y)(1 - p_{\theta_0}(y))
+ s\, p_\theta(y)(1 - p_\theta(y))}
       {(E + s)^2},
\qquad
\E\,\delta_y = \frac{\tfrac12 - \tfrac m2 \bar p_y}{E + s}
+ O\Bigl(\frac{1}{(E + s)^2}\Bigr).
\end{equation}
Expanding
$\rho_t = \KL{p_\theta}{\bar p} + \sum_y p_\theta(y)
\bigl[-\E\delta_y/\bar p_y
+ (\mathrm{Var}(\delta_y) + (\E\delta_y)^2)/(2\bar p_y^2)\bigr]
+ O((E + s)^{-3/2})$
and inserting~\eqref{eq:var}, at
$p_\theta = \bar p + O(\|p_{\theta_0} - p_\theta\|)$ the smoothing
term contributes
$-\sum_y (\tfrac12 - \tfrac m2 \bar p_y)/(E + s)
= -(\tfrac m2 - \tfrac m2)/(E + s) = 0$ at leading order, while its
discrepancy-dependent remainder joins the error terms, and the
variance term collapses, using $a_0^2 N_0 = a_0 E$ and
$\sum_y (1 - \bar p_y) = m - 1 = d$, into
\begin{equation}
\label{eq:steptwo}
\rho_t
= \KL{p_\theta}{M_{\gamma_s}}
+ \frac{d}{2}\cdot
\frac{a_0 E + s}{(E + s)^2}
+ O\Bigl(\frac{\|p_{\theta_0} - p_\theta\|}{E + s}\Bigr)
+ O\Bigl(\frac{1}{(E + s)^{3/2}}\Bigr).
\end{equation}
The cancellation that removes the discrepancy from the second term is
worth noting: the quadratic variance contribution alone is the
discrepancy-dependent quantity
\begin{equation*}
\frac{1}{2(E+s)^2}\sum_y \frac{p_\theta(y)}{\bar p_y^2}
\bigl(a_0^2 N_0\, p_{\theta_0}(y)(1 - p_{\theta_0}(y))
+ s\, p_\theta(y)(1 - p_\theta(y))\bigr),
\end{equation*}
and it is the first-order
bias of the~$\tfrac12$ pseudo-counts of the Jeffreys initial prior that
cancels its source-dependent part, leaving the universal coefficient~$\frac d2$. At~$a_0 = 1$, $s = 0$, the second term of~\eqref{eq:steptwo} is the classical $\frac{d}{2 N_0}$ of prediction
from a matched sample, now seen to hold at every discrepancy.

Summing~\eqref{eq:steptwo} over $t = 1, \dots, N$ and replacing sums
by integrals with~$O(1)$ total error: the first term gives, by
identity~\eqref{eq:diffid} of Lemma~\ref{lem:js} applied with~$\lambda = E$,
\begin{equation}
\int_0^N \KL{p_\theta}{M_{E/(E+s)}}\, ds
= (E + N)\,\JS_{\gamma}(p_{\theta_0}, p_\theta),
\qquad \gamma = \frac{E}{E + N},
\end{equation}
and the second term gives
\begin{equation}
\frac{d}{2}\int_0^N \frac{a_0 E + s}{(E + s)^2}\, ds
= \frac{d}{2}\,\ln\frac{E + N}{E}
- \frac{(1 - a_0)\, d}{2}\cdot\frac{N}{E + N}.
\end{equation}
Collecting terms yields~\eqref{eq:main}. The error terms
of~\eqref{eq:steptwo} integrate to
$O(\|p_{\theta_0} - p_\theta\| \ln\tfrac{E + N}{E})$ and
$O(E^{-1/2})$ respectively, absorbed into the stated remainders,
and the boundary region $s = O(1)$ contributes $O(1)$, absorbed
into $O(1)_{\!*}$. \qed

\section{Proof of Theorem~\ref{thm:stepwise}}
\label{app:stepwise}

Expression~\eqref{eq:steprho} is~\eqref{eq:steptwo} with the
discrepancy term expanded by~\eqref{eq:quaddecay},
$\KL{p_\theta}{M_{\gamma_s}} = \gamma_s^2 D_2 + O(\gamma_s^3)$. Treat~$a_0$ as the free variable at fixed~$s \geq 1$ and~$N_0$, with~$E = a_0 N_0$. Then
\begin{equation}
\frac{\partial}{\partial a_0}
\Bigl[\frac{a_0^2 N_0^2\, D_2}{(a_0 N_0 + s)^2}\Bigr]
= \frac{2\, a_0 N_0^2\, s\, D_2}{(a_0 N_0 + s)^3},
\qquad
\frac{\partial}{\partial a_0}
\Bigl[\frac{a_0^2 N_0 + s}{(a_0 N_0 + s)^2}\Bigr]
= \frac{2\, N_0\, s\,(a_0 - 1)}{(a_0 N_0 + s)^3},
\end{equation}
so the stationarity condition for~\eqref{eq:steprho} reads
\begin{equation}
\frac{N_0\, s}{(a_0 N_0 + s)^3}
\Bigl[2\, a_0 N_0\, D_2 + d\, a_0 - d\Bigr] = 0,
\end{equation}
whose bracket vanishes at
$a_0 = d/(2 N_0 D_2 + d)$, independent of~$s$, with the bracket
negative below and positive above this value, so the stationary point
is the minimum. Rewriting the reciprocal of the corresponding
effective size~$a_0 N_0$ gives~\eqref{eq:stepopt}. \qed

\end{document}